\newtheorem{theorem}{Theorem}
\theoremstyle{definition}
\newtheorem{definition}{Definition}[section]
\newtheorem*{remark}{Remark}
\begin{document}


\title{FedBlock: A Blockchain Approach to Federated Learning against Backdoor Attacks}

\author{
\IEEEauthorblockN{Duong H. Nguyen\IEEEauthorrefmark{1}, Phi L. Nguyen\IEEEauthorrefmark{1},
Truong T. Nguyen\IEEEauthorrefmark{2},
Hieu H. Pham\IEEEauthorrefmark{3},
Duc A. Tran\IEEEauthorrefmark{4}}
\IEEEauthorrefmark{1} Hanoi University of Science and Technology, Hanoi, Vietnam\\
\IEEEauthorrefmark{2} National Institute of Advanced Industrial Science and Technology, Tokyo, Japan\\
\IEEEauthorrefmark{3}  VinUniversity, Hanoi, Vietnam\\
\IEEEauthorrefmark{4}  University of Massachusetts, Boston, USA  (corresponding author)\\
{\small
Email: {duong.nh190044, lenp}@soict.hust.edu.vn, nguyen.truong@aist.go.jp,
 hieu.ph@vinuni.edu.vn, duc.tran@umb.edu
 }
}

\maketitle
\thispagestyle{plain}
\pagestyle{plain}

\begin{abstract}
Federated Learning (FL) is a machine learning method for training with private data locally stored in distributed machines without gathering them into one place for central learning. Despite its promises, FL is prone to critical security risks. First, because FL  depends on a central server to aggregate local training models, this is a single point of failure. The server might function maliciously. Second, due to its distributed nature, FL might encounter backdoor attacks by participating clients. They can poison the local model before submitting to the server. Either type of attack, on the server or the client side, would severely degrade learning accuracy. We propose FedBlock, a novel blockchain-based FL framework that addresses both of these security risks. FedBlock is uniquely desirable in that it involves only smart contract programming, thus deployable atop any blockchain network. Our framework is  substantiated with a comprehensive evaluation study using real-world datasets. Its robustness against backdoor attacks is competitive with the literature of FL backdoor defense. The latter, however, does not address the server risk as we do.
\end{abstract}

\begin{IEEEkeywords}
Federated Learning, Blockchain, Decentralized Computing, Backdoor Attacks.
\end{IEEEkeywords}

\section{Introduction}\label{sec:intro}

Federated Learning (FL) \cite{pmlr-v54-mcmahan17a}  is a Machine Learning approach to learning  a model using distributed training data that remain private and unmoved on local machines. A typical FL architecture consists of these local machines, hereafter referred to as the ``clients", which have the training data,  and a central server to coordinate the training, called the ``aggregation" server. The learning is an iterative procedure. In the first step, the aggregation server broadcasts a global learning model, initially random, to all   clients. In the second step,  each client in parallel performs local training on its own local data to improve this model. In the third step, the clients  send their respective improved models to the server who in turn aggregates  them to obtain a new global model. Then the first step is repeated until the global model converges.

FL is elegant in idea and has widespread applications \cite{MAL-083}. However, its success relies on the server working normally and clients being good citizens. This is not always true in practice. Indeed, FL is vulnerable to two types of security risks: server attack and backdoor attack.

\underline{Server attack}: Like in any client/server system, the centralized server is a security bottleneck. In FL, an attacker who gains control over the aggregation server can distribute bad global models to the clients, thus poisoning the entire system. Naturally, besides typical solutions designed to secure a server, the best way to avoid server attacks is by deploying multiple  servers to decentralize this single point of failure.  However, we then face a new challenge of how to coordinate these servers because the ``coordinator" could itself become a new single point of failure. 

 \underline{Backdoor attack}: Dishonest clients can send fake local models to the server.  In ``untargeted" backdoor attacks, the poisoned  global model leads to randomly bad predictions. In ``targeted" backdoor attacks, the attacker's goal  is to make the global model always predict according to a targeted outcome. Backdoor attacks are common threats to computer systems. According to   the IBM Security X-Force Threat Intelligence Index 2023, they are the top action by cybercriminals. Nearly a quarter of cyber incidents last year involved backdoor attacks. Recently, backdoor danger has been demonstrated with FL systems, raising serious concerns \cite{wang2020attack}.
 
In this paper, we are interested in immunizing FL against the above attacks. Our approach is inspired by blockchain technology \cite{TranThaiKrishnamachari22}. Blockchain is a computing solution for executing transactions in a way that is honest, immutable, and traceable. Because blockchain runs on a decentralized network of many autonomous computers with majority-based consensus, it is unaffected by any attack. Our idea is that, instead of running the centralized model aggregation on a server, we run it on the ``blockchain computer". Precisely, we implement the task of centralized aggregation as a smart contract to deploy on a blockchain network. An immediate benefit is that the risk due to server attacks becomes non-existent. Our goal then is narrowed to how to mitigate backdoor attacks. 

From the system-wide perspective, our work is a fresh direction compared to the literature. FL defense against server attacks and backdoor attacks   mainly serves non-blockchain settings \cite{nguyen2023backdoor}. Meanwhile,  blockchain use for FL has started only recently with purposes mostly not about backdoor risks, for example, to remove the dependency on the central server \cite{Madill2022ScaleSFLAS}, incentivize clients to contribute training data \cite{9293091}, or provide robustness to non-backdoor security vulnerabilities \cite{10.1145/3579654.3579700}. Existing research on backdoor defense in blockchain-based FL is rare \cite{electronics12112500}, which requires creating a dedicated blockchain network, a difficult task with limited adoption. 

In contrast, we make the following key contributions:
\begin{itemize}
\item We propose FedBlock, a blockchain-based FL framework that provides simultaneous defense against server attacks and backdoor attacks, works at the smart contract level, and hence can run on any blockchain network. To our knowledge, this is the first framework with such features.
\item We propose an implementation of FedBlock integrating a realistic backdoor-defense technique. The result is a novel decentralized backdoor defense for FL. It uses an efficient amount of model data that is discriminative enough to distinguish compromised clients from the benign. 
\item We justify and validate our technique with in-depth experiments. We evaluate comprehensive scenarios using various datasets, threat models, and attack scenarios. The results demonstrate FedBlock's viability and superiority compared to the literature. 
\end{itemize}

The remainder of the paper is organized as follows. Related work is reviewed in Section \ref{sec:related}. Some preliminaries on federated learning and backdoor models are provided in  Section \ref{sec:background}. The FedBlock framework  is proposed in Section \ref{sec:solution}, followed by a specific backdoor-defense implementation in Section \ref{sec:fedblockimplement}. The evaluation results  are discussed in Section \ref{sec:eval}.  The paper concludes in Section \ref{sec:conclusion} with pointers to our future work.

\section{Related Work}\label{sec:related}
Backdoor research in machine learning (ML) is increasing exponentially in the literature \cite{nguyen2023backdoor}. In particular, backdoor attacks  are serious concerns for FL due to the possibility of stealthily injecting a malevolent behavior in local models to affect the global model \cite{bagdasaryan2020backdoor, wang2020attack}. Unlike ML backdoors, an adversary in FL can insert poison in various iterations of the FL training pipeline, making backdoors more challenging to counter. 
Their impact is well-aware in many FL-applicable scenarios such as CV \cite{bagdasaryan2020backdoor},  NLP \cite{wang2020attack}, and healthcare \cite{jin2022backdoor}.

\textbf{FL backdoor defense. }
There is growing traction on defense against FL backdoors \cite{Shen2016AurorDA, Sun2019CanYR, wu2022federated, Wu2020MitigatingBA}. One can counter  backdoor attacks in different phases of the learning process: pre-aggregation, in-aggregation, and post-aggregation. Pre-aggregation defense \cite{Shen2016AurorDA, nguyen2022flame} aims to identify and mitigate the impact of malicious updates before  global update. In-aggregation defense \cite{Sun2019CanYR, pillutla2022robust} uses new aggregation operators customized to fight backdoors. Post-aggregation defense  \cite{wu2022federated, Wu2020MitigatingBA} provides repair to the model damaged after completing the FL training process.

For example, Xie et al. \cite{xie2021crfl} introduced a defense mechanism to cleanse local updates by clipping and perturbation. Backdoor-robust aggregation operators can be $\alpha$-trimmed mean (Yin et al., \cite{Yin2018ByzantineRobustDL}) or geometric median (Pillutla et al., \cite{pillutla2022robust}). Clustering is a favorite method to detect and remove poisoned models; e.g., k-mean used in Shen et al. \cite{Shen2016AurorDA}, cosine similarity in  Sattler et al. \cite{Sattler2020OnTB}.  FLARE \cite{10.1145/3488932.3517395} leverages the penultimate layer representations of the model for characterizing the adversarial influence on each local model update.

In other works, Wu et al. \cite{Wu2022TowardCB} focus on the activation of ``backdoor neurons" within neural networks when exposed to backdoor images and propose two strategies to identify and prune these neurons. 
FLIP \cite{zhang2023flip} requires benign clients to train the local model on crafted backdoor triggers capable of inducing misclassification, which counters  data poisoning by the attacker. Upon global aggregation, the injected backdoor features present in the aggregated  model undergo a mitigation process by applying to the benign clients. FLDetector \cite{Zhang2022FLDetectorDF} is based on observing the inconsistency of a malicious client's model updates in different FL iterations. The server predicts each client’s model update in each iteration based on historical model updates and compare with the received model to detect malice. BayBFed \cite{10179362} computes a probabilistic measure over the clients’ updates to   track  adjustments and uses a novel detection algorithm leveraging this measure to detect and remove malice.
To date, all existing countermeasures serve centralized FL and assume that the aggregation server is honest. Readers are referred to \cite{nguyen2023backdoor} for a recent comprehensive survey on FL backdoor attack and defense.

\textbf{Blockchain-based FL.} Blockchain brings benefit to FL in areas regarding participation incentive, data integrity and server vulnerability. Recently, as blockchain goes mainstream, efforts on leveraging this technology as an alternative to cloud-based FL are on the rise. 
Li et al. \cite{9293091} proposed a blockchain-based FL framework with an incentive mechanism based on Committee Consensus. This mechanism mitigates the computational overhead associated with consensus decision making. Lin et al. \cite{10.1145/3579654.3579700} extended this work with the new mechanism of Dual Committee Consensus to allow large numbers of nodes to join, which provides better security. To enhance scalability, Madill et al. \cite{Madill2022ScaleSFLAS} applies sharding to blockchain-based FL. Instead of using one blockchain for the entire FL system, clients are assigned to different blockchain shards where shard-level aggregation takes place. The resulted shard models are then aggregated globally. Blockchain is also integrated in FL systems leveraged by edge computing. 

Regarding backdoor attacks in the context of blockchain-based FL, very rare works can be found, and only recently \cite{10001370,electronics12112500}. To our knowledge, the only effort on designing an FL backdoor defense using blockchain is by Li et al. \cite{electronics12112500}. They proposed building a new blockchain network from scratch customized for this specific purpose. Its feasibility is questionable, not to mention its limited adoption. In contrast, our proposed blockchain-based FL framework is blockchain-agnostic. It works at the smart contract level, thus deployable on any blockchain network.

\section{Preliminaries} \label{sec:background}
Consider a  learning task mapping an input object in $X$ to an output label in $Y$, given a  training set of  samples, $\mathcal{D} = \{(x_1, y_1), ..., (x_{|\mathcal{D}|}, y_{|\mathcal{D}|})\}$ where   $x_{i} \in X$ is the feature vector of the $i^{th}$ input sample and $y_i \in Y$  its corresponding label. Typically, the objective is to find a learning model, represented by $\mathbf{w} \in \mathbb{R}^d$, that minimizes the following empirical loss 
\begin{align}
F(\mathbf{w}; \mathcal{D}) \triangleq   \frac{1}{|\mathcal{D}|} \sum_{(x, y) \in \mathcal{D}} l\bigg(\mathbf{w};x, y\bigg)
\end{align}
where  ${l}(\mathbf{w}; x,y)$ is a user-defined function measuring the prediction loss  on sample $(x, y)$  using  model $\mathbf{w}$. 
A common way to find $\mathbf{w}$ is by Stochastic Gradient Descent.
\begin{algorithm}
\caption{Stochastic Gradient Descent (SGD)}
\label{alg:SGD}
\begin{enumerate}
\item  Start with some initial model $\mathbf{w}^{(0)}$
\item For each step $t$ = 1, 2, ..., $\tau$: compute the update below
	\[
	\mathbf{w}^{(t)} = \mathbf{w}^{(t-1)} - \eta \nabla_w F(\mathbf{w}^{(t-1)}; \mathcal{D})
	\]
\item Return $\mathbf{w}^{(\tau)}$
\end{enumerate}
\end{algorithm}

Here, $\nabla$ is the vector differential operator in math. As the number of steps $\tau$ is sufficiently large, the value $\mathbf{w}^{(\tau)}$ at the end of this loop should converge to the optimum $\mathbf{w}$. Parameter $\eta$ is predefined, called the learning rate.  Denote this algorithm by $\texttt{SGD}(\mathbf{w}^{(0)}; \mathcal{D})$ where $\mathbf{w}^{(0)}$ is the initial model to begin the loop with  and $\mathcal{D}$ the set of training samples. 

\subsection{Federated Learning}

In the setting of Federated Learning (FL), the training samples are not available all at one place, but instead they reside independently and privately on many local client machines. Let $N$ be the number of clients and $\mathcal{D}$ = $\mathcal{D}_1  \cup \mathcal{D}_2 \cup  ...\cup \mathcal{D}_N$ where $\mathcal{D}_i$ denotes the  subset of training samples owned by client $i \in [N]$. 
The prediction loss can then be expressed as 
\begin{align*}
F(\mathbf{w}; \mathcal{D}) = \frac{1}{|\mathcal{D}|} \sum_{(x, y) \in \mathcal{D}} l\bigg(\mathbf{w};x, y\bigg) \nonumber\\
= \sum_{i=1}^N \frac{|\mathcal{D}_i|}{|\mathcal{D}|} \bigg(\underbrace{ \frac{1}{|\mathcal{D}_i|} \sum_{(x, y) \in \mathcal{D}_i} l(\mathbf{w}; x, y)}_{F(\mathbf{w}; \mathcal{D}_i)} \bigg)
= \sum_{i=1}^N \frac{|\mathcal{D}_i|}{|\mathcal{D}|} F(\mathbf{w}; \mathcal{D}_i). \label{eq:predictionloss}
\end{align*}
We can think of  $F(\mathbf{w}; \mathcal{D}_i)$ as the local prediction loss of client $i$.  This suggests  a simple distributed learning approach: the clients  can each independently  solve the learning problem  using their own  training data and the average over all these local models provides a good approximation for the optimal model. This is the foundation for FL.  

The most popular FL algorithm arguably is FedAvg \cite{pmlr-v54-mcmahan17a}. FedAvg  uses SGD presented above as the learning method. FedAvg runs repetitively a number of rounds, called ``aggregation rounds" (a.k.a. ``communication" or ``global" rounds in the literature). In each round, the server averages the clients' local models to update the global model. This new global model is then sent to all clients for re-training (in practice, a random subset to save costs).  Basically, it works as follows.
\begin{algorithm}
 \caption{Federated Averaging (FedAvg)}
 \label{alg:FedAvg}
\begin{enumerate}
\item The server starts with an initial model $\mathbf{w}^{(0)}$
\item For each round $t = 1 \rightarrow T$ (called ``aggregation round"):
	\begin{enumerate}
	\item The server broadcasts model $\mathbf{w}^{(t-1)}$ to a random subset $\mathcal{C}^{(t)}$ of $K \le N$  clients
	\item In response, each  client $i \in \mathcal{C}^{(t)}$ runs SGD-based local training to improve its local model 
		\[
		\mathbf{w}^{(t)}_i = \textup{\texttt{SGD}}(\mathbf{w}^{(t-1)}; \mathcal{D}_i)
		\]
		and sends it to the server. This SGD starts with the  global model $\mathbf{w}^{(t-1)}$ just received from the server.
	\item The server updates the global model by averaging the updated local models, weighted by data size:
	\[
	\mathbf{w}^{(t)} = \frac{\sum_{i \in \mathcal{C}^{(t)}} {|\mathcal{D}_i|} \mathbf{w}^{(t)}_i }{ \sum_{i \in \mathcal{C}^{(t)}} {|\mathcal{D}_i|}}
	\]
	\end{enumerate}
\item Return $\mathbf{w}^{(T)}$
\end{enumerate}
\end{algorithm}

This algorithm has several variants \cite{pmlr-v54-mcmahan17a}. For example, different criteria may be considered to choose the subset $\mathcal{C}^{(t)}$ of clients to receive the global model update. Also, the SGD procedure can be modified to apply on a mini-batch  of  samples in each gradient descent step instead of the entire training set.

\subsection{Backdoor Attacks}
Backdoor attacks, or ``backdoors", in machine learning are techniques that implant secret behaviors into trained models. This way, for example, an adversary can create a backdoor that bypasses a face recognition system.
FL is highly vulnerable to backdoors due to the autonomy of FL clients in training their model. In our threat model, 
we assume that a backdoor attacker takes control of a subset of clients. 
The attacker can poison their training data and training model.
\begin{itemize}
   \item \textbf{Data Poisoning:} This is the most popular method. Trigger artifacts are injected into   the training data by ways  such as edge-case backdoor \cite{wang2020attack} and trigger-based backdoor \cite{bagdasaryan2020backdoor, xie2020dba}. The poisoned model works correctly as usual on normal training data, but if the input data is crafted to include the triggering patterns, the model will predict a targeted result that is pre-chosen by the attacker. 
   \item \textbf{Model Poisoning:} To amplify the impact,  the attacker can scale the local gradients of the updated model before submitting it to the server \cite{bagdasaryan2020backdoor}. The attack can go stealthier by using Projected Gradient Descent (PGD) \cite{Sun2019CanYR} to make the poisoned model look like the global model in each aggregation round. PGD can further be combined with model replacement to maximize attack effectiveness \cite{wang2020attack}.
\end{itemize}

In our threat model, the attacker knows the server's aggregation logic and the other uncompromised clients' training logic, but cannot modify any of these processes.

\section{Proposed Framework}\label{sec:solution}
We propose FedBlock, a backdoor-robust FL framework that runs on a blockchain network. The blockchain can be any smart contract network at layer-1 such as Ethereum and Algorand or at layer-2 such as Optimism and Arbitrum. A smart contract blockchain is one that enables development of arbitrary applications as smart contracts. For example, Bitcoin is not a smart contract blockchain because it is designed specifically to serve one application which is payment, whereas Ethereum is a smart contract blockchain because it allows any application to be developed atop.

\subsection{System Overview}
In a nutshell, as illustrated in Figure \ref{fig:fedblock_architecture}, FedBlock has the following stakeholders: 
\begin{itemize}
\item \textbf{The smart contract}: No server is needed. Instead, we  deploy a smart contract on the blockchain to implement the logic of model aggregation that otherwise would run on the server as in cloud-based FL. This aggregation mostly involves number averaging, and so the computation is simple enough to be implemented as a smart contract. Because every smart contract is open-sourced and immutable, its correctness and honesty are verifiable. Server attacks are thus non-existent.
\item \textbf{The clients}: Clients run local training on their own machine, not on the blockchain, as usual. Once a client has finished training, it uploads the updated local model by calling a function of the smart contract. In the other way back, once the smart contract has updated the global model, it triggers a blockchain event. Clients watch this event and when notified call the smart contract to download the updated global model.  
\item \textbf{The verifiers}: Due to poison possibility,  local models need to be verified of quality before being aggregated by the smart contract. No single entity should be trusted to provide this service. Instead, we propose a decentralized  method relying on a set of verfifiers. A client or anyone can be a verifier, who watches the  blockchain for events triggering a verification need. When notified, each verifier performs the verification locally and outputs a score to the smart contract. The smart contract uses these scores to prioritize trusted models over suspected ones.    
\end{itemize}
\begin{figure}
    \centering
    \includegraphics[width=0.45\textwidth]{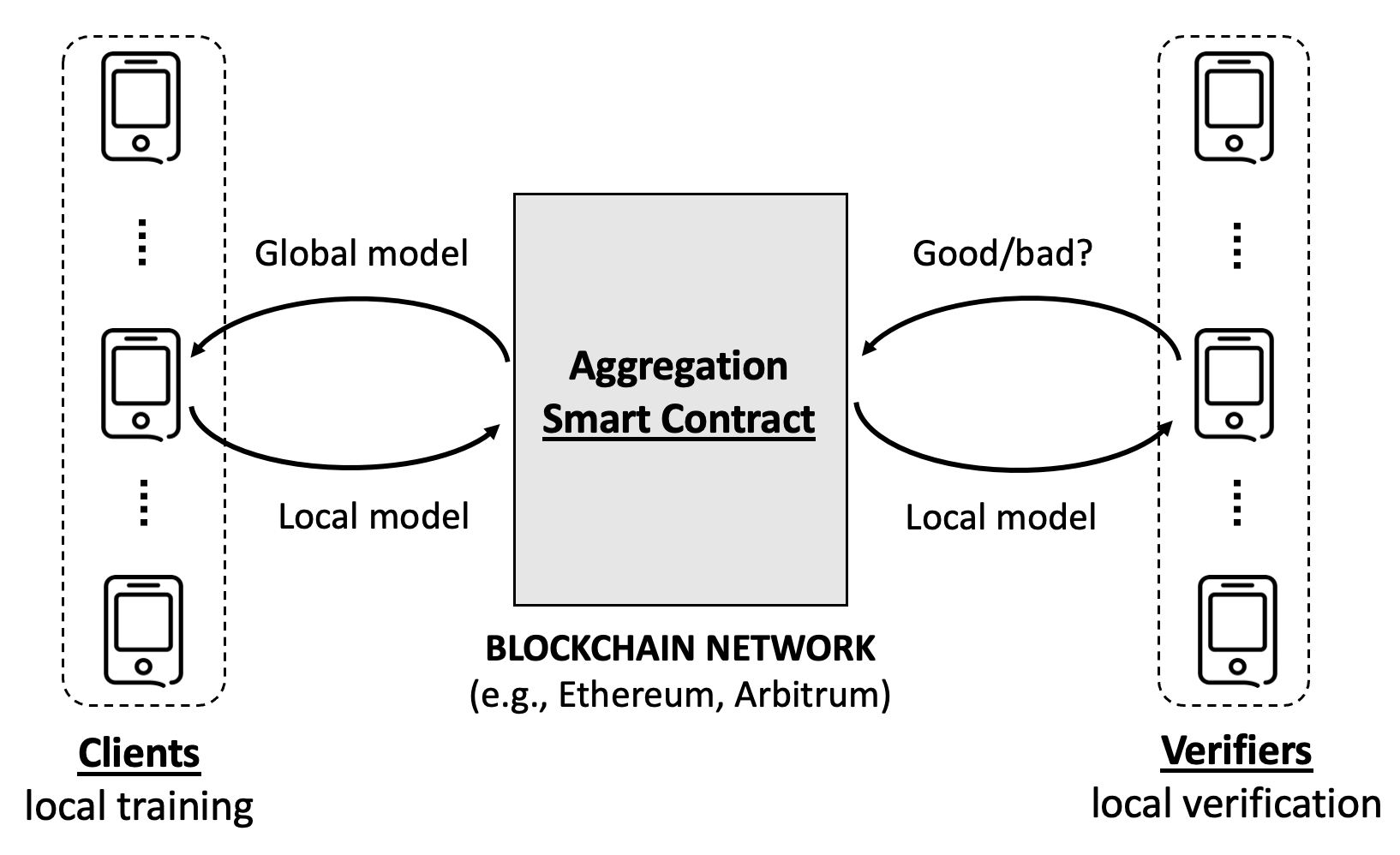}
    \caption{Overview architecture of FedBlock. }
    \label{fig:fedblock_architecture}
\end{figure}

To realize this architecture, we are aware of blockchain drawback. In blockchain, it costs a network fee, called ``gas fee", to call a smart contract function if the call results in changing smart contract data. As the size of a training model can be substantial, given repeated uploads of local models, the total gas fee is a concern. Secondly, on the smart contract side,   though the aggregation logic is simple, it requires many for-loops to read many local models for many aggregation rounds. This computation cost is another concern. Thirdly, if all the local models are stored as variables inside the smart contract, the storage size can grow infeasibly with many clients joining.

To address the above limitations, we propose the following:
\begin{itemize}
\item \textbf{On-chain storage}: we allow the smart contract to be able to store $K < N$ local models in addition to the global model. This $K$ is the value $K$ in the conventional FedAvg training algorithm (Step 2a of Algorithm \ref{alg:FedAvg}). When the smart contract receives a local model, it puts the model in an aggregation queue. When the queue is full with $K$ models, the global update aggregates these models. The queue is emptied for the next batch of $K$ local models. 
\item \textbf{Off-chain global model}: the smart contract saves a copy of the current global model at some downloadable off-chain address. This way, clients have the convenience of downloading the global model off chain, thus avoiding gas fees and smart contract bottleneck. To ensure the correctness of the off-chain copy, a hash value of its is stored in the smart contract. 
\item \textbf{Off-chain local models}: when a client submits its local model to the smart contract, we require that it  save a copy at some downloadable off-chain address. To guarantee immutability, a hash of the model is attached with each client in the smart contract. The off-chain copy is later downloaded by the verifiers for verification scoring. They avoid having to go on-chain to download the model. 
\end{itemize}
Clients are free to choose where off-chain to store their model, but they must provide a downloadable URL. Alternatively, they can utilize a decentralized storage solution such as IPFS.
We leave this choice of storage beyond this paper.

\subsection{Verification and Aggregation}\label{sec:verificationprocedure}
When a verifier is notified of a verification need, it calls a function in the smart contract to get the list of clients to verify. How to determine this list is a logic in the smart contract to be discussed later.   

\subsubsection{Local Verification}
After downloading its list of clients to verify, each verifier classifies these clients in three verification-score groups,  score 0 (malicious), 1 (benign), or 1/2 (unsure), and sends the scores to the smart contract. We will propose a classification method later in the next section.

\subsubsection{Trust Score}
We note that the conventional FL is a repetitive algorithm running in many rounds in a centralized and coordinated way. Because FedBlock is a decentralized setting on the blockchain, we do not have a central coordination anymore. Instead, the clients, the verifiers, and the smart contract are operating asynchronously, whose collaboration is triggered by blockchain events emitted by the smart contract. As such, verifiers may submit verification scores for different clients to the smart contract at different time.

When a client $i$ has a new verification score, say $s_i \in \{0, 1/2, 1\}$, this score is only temporal at the current time, based on its current local model, provided by one verifier. To take behavioral history into account, the smart contract associates with each client $i$ a ``trust score" $S_i \in [0, 1]$, equal to 1 initially, that represents its long-term trustworthiness.  Specifically, upon receipt of a verification score $s_i$ for client $i$, the smart contract updates this client's trust score to
\begin{align}
S_i \leftarrow  \frac{1}{t_i}\bigg((t_i-1)S_i + s_i\bigg),
\label{eq:trustscore}
\end{align}
where $t_i$ counts the times client $i$ receives a verification score. Intuitively, $S_i$ is the average of verification scores to date. 

\subsubsection{Global Aggregation}
Let the current time be $t$ when the smart contract updates the global model by aggregating the local models in the aggregation queue as discussed above. At this time, each client $i$ has a trust score $S_i^{(t)} \in [0, 1]$. The smart contract computes the new global model by averaging the local models weighted by their size and trust score:
\begin{align}
\mathbf{w}^{(t)} = \frac{\sum_{i \in \mathcal{C}^{(t)}} S_i^{(t)} {|\mathcal{D}_i|} \mathbf{w}^{(t)}_i }{ \sum_{i \in \mathcal{C}^{(t)}} {S_i^{(t)} |\mathcal{D}_i|}}.
\label{eq:fedblockaggregate}
\end{align}
Here, $\mathcal{C}^{(t)}$ is the set of local models in the aggregation queue and $\mathbf{w}^{(t)}_i$ is the local model of each client $i$ in this queue.

\subsubsection{Client Set to Verify}
In FedBlock, the smart contract keeps track of a time-varying subset $\mathcal{M}$ of $M< N$ clients that need verification. Note that this $M$ is not necessarily the same as the number $K$ of models in the aggregation queue. Set $\mathcal{M}$ is regenerated from time to time to ensure every client has a chance to be verified. By default, FedBlock selects the clients for $\mathcal{M}$ uniformly at random but we can consider other selection criteria. For example,  clients who have a lower trust score or have been verified less often are stronger candidates for the verification set. 

\subsubsection{Malicious Verifiers}
Verifiers can themselves be malicious, results may be lost in communication, or fewer verifiers involve than needed. As such,  clients may receive bad or no verification scores.  In theory, our framework allows anyone, not necessarily a client, to serve the role of a verifier. For practical purposes, this requires a good  mechanism to incentivize participation from honest verifiers and discourage the bad. 

To increase the quality of verifiers, one way to find good verifiers is among benign clients. These clients are already contributing to the FL task and  might be interested in being verifiers. In this direction, we propose the following client-as-a-verifier approach:  in each round, verifiers can only be accepted among clients with trust score above 1/2. The verifiers, if indeed benign, will do the verification honestly. The malicious  will submit bad scores. In FedBlock, the verification procedure takes place repeatedly round after round. Because the number of benign clients dominates the malicious,  the quality of trust scoring should improve over the time. Hence, the same is expected for the quality of verification.

\section{FedBlock Backdoor Defense}\label{sec:fedblockimplement}
We now propose a concrete backdoor-defense implementation for the proposed framework. This concerns mainly the verification procedure at the verifiers.  We are driven by the following observations:
\begin{itemize}
    \item In a typical DNN model, the ultimate gradient captures rich information about the training objective. This has been justified in the literature \cite{DBLP:conf/ijcnn/NguyenNNWPNN23, 10.1145/3488932.3517395}. Because compromised models share the same training objective, the gradient information can be useful to detect them.
    \item In a FL system, the clients'  data are often non-IID \cite{ZHU2021371} and so their local updates should diverse in early training iterations. Meanwhile, the poisoned models exhibits similar patterns because they are trained with the same backdoor objective. 
\end{itemize}
We elaborate below and then introduce our implementation.

\subsection{Gradient of Ultimate Layer}\label{subsec:ultimatelayer}

\begin{figure}[tb]
    \centering
\includegraphics[width=0.9\columnwidth]{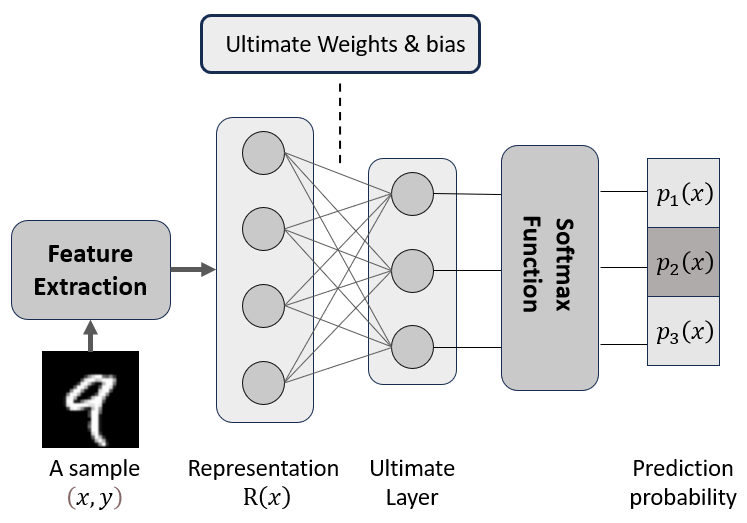}
    \caption{ The ultimate layer is the last layer, which connects from the previous layer via the ultimate weight and bias matrix and to the softmax function.}  
    \label{fig:penultimate}
\end{figure}

As illustrated in Figure \ref{fig:penultimate}, consider a typical deep neural network consisting of a feature extractor and a classifier that is trained using the cross entropy loss by SGD method. The classifier has a dense layer, followed by a softmax activation function, and a bias vector layer. The last layer of neurons is called the ``Ultimate Layer". The number of the ultimate layer's neurons equals to $l$, the number of classes, with the $j^{th}$ neuron indicating the prediction result for the $j^{th}$ class. The weight matrix $\mathbf{U} = [\mathbf{u}_1, ..., \mathbf{u}_l]$ and bias vector $\mathbf{b} = \left < b_1, ..., b_l \right >$ connecting to the ultimate layer from the preceding layer are referred to as ``ultimate weight" and ``ultimate bias". Here $\mathbf{u}_j$ is the $j^{th}$ row of $\mathbf{U}$ and $b_j$ is the $j^{th}$ element of $\mathbf{b}$. 
Intuitively, $\mathbf{u}_j$ and $b_j$ contain the most representative information of the training samples of the $j^{th}$ class.

\begin{definition}[Ultimate Gradient]
The ultimate gradient is the improvement of $\mathbf{U}$ and $\mathbf{b}$   after training the model  with the SGD method. 
Specifically, let $\mathbf{U}^t$ and $\mathbf{b}^t$ be the values of   $\mathbf{U}$ and $\mathbf{b}$ at the $t^{th}$ training iteration, then the ultimate gradient at the $t^{th}$ iteration is the following combination:
\[
\mathbf{g}^t = \bigg(\nabla \mathbf{U}^t=\frac{\mathbf{U}^t - \mathbf{U}^{t-1}}{-\eta},  \nabla \mathbf{b}^t=\frac{\mathbf{b}^t - \mathbf{b}^{t-1}}{-\eta} \bigg),
\]
where $\eta$ is the learning rate. $\nabla \mathbf{U}^t$ is called the ultimate weight gradient and $\nabla \mathbf{b}^t$ is called the ultimate bias gradient. 
\end{definition}
To ease the presentation, we may omit the index $t$ and simply represent the ultimate gradients by $\mathbf{g}$, $\mathbf{U}$, and $\mathbf{b}$.

\begin{definition}[By-Class Ultimate Gradient]
This quantity highlights the effect of the training data on the ultimate layer. It is  obtained by summing the ultimate gradient by class
 \[
 \boldsymbol{\mu} = \left<\mathbf{g} \cdot \mathbf{1}\right> {=} \left<\nabla \mathbf{U} \cdot \mathbf{1}, \nabla \mathbf{b}\right>.
 \]
 Note that $\boldsymbol{\mu}$ is a vector concatenating two $l$-dimensional vectors: the by-class sum of the ultimate weight gradient vector $\nabla \mathbf{U}$ and the ultimate bias gradient vector  $\nabla \mathbf{b}$. 
\end{definition}

\begin{remark}[1] If we train on two datasets with same distribution, the resulting ultimate weight gradient ($\nabla \mathbf{U}$) should share the same pattern. Because backdoor attackers need a sufficient poisoned data rate to maximize the impact and apply the same attack objective, the compromised models should have similar distributions, thus similar ultimate gradients. On the other hand, the training data of benign FL clients are non-iid in practice. Their ultimate gradients should exhibit different patterns in the early FL rounds. Hence, the attack could be detected by looking at the similarity of ultimate gradients. 
\end{remark}

\begin{remark}[2] The above remark is less useful in later FL rounds. However, in backdoor attacks, the by-class ultimate gradients ($\boldsymbol{\mu}$) of the compromised clients' models tend to have similar patterns. Also, these patterns differ from that of the benign. 
Hence,  quantity $\boldsymbol{\mu}$ is useful for us to distinguish them. Note that similar by-class gradients do not necessarily imply similar gradients.
\end{remark}

\subsection{Ultimate Gradient based Verification}\label{subsection:verify}
The above insights are extremely useful. Instead of processing the raw training models which are large in size and may not offer significant discriminating hints, we only need to look at the ultimate gradient information to detect malicious clients. 

Recall that $\mathbf{w}_i^{(t)}$ denotes client $i$'s raw local model at current time $t$. Let $\mathbf{U}_i^{(t)}$ and $\mathbf{b}_i^{(t)}$  denote its ultimate weight and ultimate bias, respectively. The corresponding  ultimate gradients are $\nabla \mathbf{U}_i^{(t)}$ and $\nabla \mathbf{b}_i^{(t)}$. For ease of presentation,  superscript $t$ is omitted when the context is understood.

Consider a particular verifier and let $\mathcal{C}$ be its client verficication set.
In our proposed FedBlock implementation, the verifier will download the ultimate gradient data, not the raw model data; specifically, 
\[
\bigg \{  (\nabla\mathbf{U}_i, \nabla\mathbf{b}_i ) ~|~ i \in \mathcal{C} \bigg\}.
\]
It then will assign a score to each client in $\mathcal{C}$ according to a verification procedure as follows.

\subsubsection{Gradient Similarity Filter}
According to Remark (1), the gradients of compromised clients should be similar in pattern, and on the other hand, that of the benign should diverse in early rounds. We apply the following filter.
Let
\[
\mathbf{U}_{*} = 
\frac{\sum_{i \in \mathcal{C}} | \mathcal{D}_i |  \mathbf{U}_i}{\sum_{i \in \mathcal{C}} | \mathcal{D}_i |}, 
\nabla \mathbf{U}_{*} = 
\frac{\sum_{i \in \mathcal{C}} | \mathcal{D}_i | \nabla \mathbf{U}_i}{\sum_{i \in \mathcal{C}} | \mathcal{D}_i |}
\]
be the average ultimate weight and average ultimate weight gradient over clients in $\mathcal{C}$.
Then, we compute a score for each client $i$ 
\[
a_i = \frac{\mathbf{U}_i- \mathbf{U}_{*} }{\| \mathbf{U}_i-\mathbf{U}_*\|} \cdot \frac{\nabla \mathbf{U}_{*}}{\|\nabla \mathbf{U}_{*}\|}.
\]
and normalize it  using min-max scaling 
\[
a_i  \leftarrow \frac{a_i - \min_j\{a_j\}}{\max_j\{a_j\} - \min_j\{a_j\}}.
\]
It is intuitive that this score tends to be higher for compromised models than that of the benign. We then use a threshold 
\[
S_1 = \bigg\{ i  \in \mathcal{C} ~|~  a_i >  \underset{j} {\text{median}} ~ \{a_j\} \bigg\},
\]
to categorize clients in $S_1$ as   malicious. 

\subsubsection{By-Class Gradient Similarity Filter}
Although diverse in early rounds, the local models of benign clients should behave similarly in  late rounds of FL global update because they are aimed to converge globally. As such, the above filter can falsely classify benign clients as malicious. 

According to Remark (2), the by-class ultimate gradients of malicious models tend to have similar patterns, that of the benign in late rounds  are also similar, but the malicious pattern and the benign pattern are different.  We thus apply another filter based on k-mean clustering to distinguish these two groups. Here, each client $i$ is a  ``point" to put in k-mean whose feature vector consists of L2 distances of its by-class ultimate gradient to that of the other clients:
$ \mathbf{x}_i = \bigg < \| \boldsymbol{\mu}_i - \boldsymbol{\mu}_1\|, \| \boldsymbol{\mu}_i - \boldsymbol{\mu}_2\|, \dots, \| \boldsymbol{\mu}_i - \boldsymbol{\mu}_{|\mathcal{C}|}\| \bigg >$. 
To find which cluster is the malicious, we rely on trust scores. As we discuss in Section \ref{sec:verificationprocedure}, each client has a long-term trust score. We decide that the malicious cluster is the one containing the lowest-score client. Let $S_2$ be the set of clients categorized as malicious.

\subsubsection{Combining Filters}
After obtaining suspected sets $S_1$ and $S_2$, the verifier assigns client $i \in \mathcal{C}$ the following score
\[
s_i = 
\left\{
\begin{array}{lr}
    0 & \text{if } i \in S_1 \cap S_2\\
    1 & \text{if } i \not \in (S_1 \cup S_2)\\
    1/2 & \text{otherwise.}
\end{array}
\right.
\]
The verifier then submits the verification scores of all clients in $\mathcal{C}$ to the smart contract. The smart contract uses these scores to update their long-term trust score according to Eq. \eqref{eq:trustscore}. It then updates the global model by aggregating the local models weighted by these trust scores according to Eq. \eqref{eq:fedblockaggregate}. 

\subsection{Quality Control of Verification}
To increase the quality of the verification, it is important to use as many verifiers and provide scores for as many clients as possible. How many is enough? In practice, we need to incentivize verifiers. More verifiers means a higher incentive budget, not to mention the case we cannot find that many available. On the verifier side, more clients to verify means more communication and computation costs for the verifiers. We need a good balance.

Consider the verification of client set $\mathcal{M}$ of size $M$ (currently selected by the smart contract).  Assume $V$ verifiers, each verifying $L$ random clients from $\mathcal{M}$. We want to find 
the minimum $V$ and $L$ such that each client in $\mathcal{M}$ is verified by at least one verifier. We have the following important results.

\begin{theorem}[Optimal value for $L$]
Given $V$ verifiers and $M$ clients, the expected value of $L$ clients that each verifier needs to verify such that all $M$ clients are verified is 
\begin{align}
\mathbb{E}[L]= \sum_{l=1}^M\sum_{s=1}^M \binom{M}{s} (-1)^{s+1} 
\left[ \binom{M-s}{l} / \binom{M}{l} \right]^V.
\label{eq:L}
\end{align}
\label{theorem:L}
\end{theorem}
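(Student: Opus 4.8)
The plan is to treat $L$ as a well-defined random variable and compute $\mathbb{E}[L]$ by combining the tail-sum formula for expectations with an inclusion--exclusion over the client set $\mathcal{M}$.

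First I would make the random variable precise. Independently for each verifier $v\in[V]$, draw a uniformly random permutation $\pi_v$ of the $M$ clients, and interpret ``verifier $v$ verifies $l$ clients'' as its selecting the prefix $\pi_v(\{1,\dots,l\})$; the marginal law of this prefix is exactly a uniformly random $l$-subset of $\mathcal{M}$, matching the model in the theorem. Since these prefixes grow monotonically with $l$, the coverage event $\big\{\bigcup_{v}\pi_v(\{1,\dots,l\})=\mathcal{M}\big\}$ is monotone in $l$, so $L:=\min\{l:\text{all }M\text{ clients are verified}\}$ is well defined with $1\le L\le M$. The point of the coupling is that, for each fixed $l$, ``full coverage at size $l$'' has the same probability under this construction as when the $V$ subsets are drawn independently, so $\Pr[L\le l]$ equals the coverage probability for $V$ independent uniform $l$-subsets; equivalently, $\Pr[L\ge l]$ equals the probability that, when each verifier picks a random $(l-1)$-subset, some client is missed by all $V$ of them.

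Next I would apply $\mathbb{E}[L]=\sum_{l\ge 1}\Pr[L\ge l]$ (valid since $L$ is a positive integer bounded by $M$) and evaluate the ``some client is missed'' probabilities. Fix a size $l$ and let $A_j$ be the event that client $j$ is missed by every verifier. For a set $S\subseteq\mathcal{M}$ with $|S|=s$, a single verifier's random $l$-subset is disjoint from $S$ with probability $\binom{M-s}{l}\big/\binom{M}{l}$ --- the count of $l$-subsets of the $M-s$ clients outside $S$ over the count of all $l$-subsets --- and, by independence across verifiers, $\Pr\big[\bigcap_{j\in S}A_j\big]=\big[\binom{M-s}{l}/\binom{M}{l}\big]^{V}$. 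This depends on $S$ only through $s$, so inclusion--exclusion gives
\[
\Pr\!\Big[\bigcup_{j=1}^{M}A_j\Big]=\sum_{s=1}^{M}(-1)^{s+1}\!\!\sum_{|S|=s}\Pr\!\Big[\bigcap_{j\in S}A_j\Big]=\sum_{s=1}^{M}(-1)^{s+1}\binom{M}{s}\left[\binom{M-s}{l}\Big/\binom{M}{l}\right]^{V}.
\]
Summing this expression over the appropriate range of $l$ and relabeling the summation index produces \eqref{eq:L}; note that terms with $s>M-l$ drop out automatically because $\binom{M-s}{l}=0$, so the double sum is finite, and the endpoint contributions (at $l=0$, where the bracket equals $1$, and at $l=M$, where the inner sum vanishes) must be handled to pin down the exact index range written in \eqref{eq:L}.

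I expect the main obstacle to be exactly this last bookkeeping: turning the informal quantity ``number of clients each verifier needs to verify'' into a random variable, and then matching $\Pr[L\ge l]$ to a coverage-failure event at the correct subset size ($l$ versus $l-1$), which is what governs whether the outer sum runs over $0,\dots,M-1$ or $1,\dots,M$. The inclusion--exclusion itself, and the computation of the per-verifier avoidance probability, are routine; the one substantive modeling choice there is that distinct verifiers are independent while, within one verifier, the inclusions of distinct clients are negatively correlated, which is why the avoidance probability is the binomial ratio $\binom{M-s}{l}/\binom{M}{l}$ rather than $\big(\tfrac{M-s}{M}\big)^{l}$, and only the across-verifier independence is used to raise it to the $V$-th power.
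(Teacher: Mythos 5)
Your proposal follows essentially the same route as the paper's own proof: compute $\Pr[\text{some client is missed when each verifier draws an } l\text{-subset}]$ by inclusion--exclusion over subsets $S\subseteq\mathcal{M}$, using the per-verifier avoidance probability $\binom{M-s}{l}/\binom{M}{l}$ raised to the $V$-th power, and then sum these tail probabilities over $l$ to get $\mathbb{E}[L]$. The only differences are that you are more careful than the paper about making $L$ a well-defined random variable (via the monotone prefix coupling) and about the $l=0$ endpoint in the tail-sum formula --- a bookkeeping point the paper's proof silently skips, and which you correctly flag as the place where the stated index range $l=1,\dots,M$ needs to be pinned down.
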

 \begin{proof}
     Consider a subset ${S}  \subset \mathcal{M}$. If the  need is to verify $l$ clients, the probability of choosing none from set $\mathcal{S}$ is
\begin{equation}
Q({S})=\left[\binom{M-|{S}|}{l} / \binom{M}{l} \right]^V.
\end{equation}
The probability that when $L > l$ all $M$ clients are verified is:
\begin{align}
\mathrm{P}(L > l)= 
\sum_{S \subset \mathcal{M}, |S|=1} Q(S)-\sum_{S \subset \mathcal{M} ,|S|=2} Q(S)\\
+\sum_{S \subset \mathcal{M}, |S|=3} Q(S)- \sum_{S \subset \mathcal{M}, |S|=4} Q(S) + \ldots \\
=\sum_{s=1}^n \binom{n}{s} \cdot(-1)^{s+1}\cdot
\left[\binom{n-s}{l} / \binom{n}{l} \right]^V.
\end{align}
So, the expected  $L$ such that all $M$ clients are verified is: 
\begin{align*}
\mathbb{E}[L]=\sum_{l=1}^n P(L > l)\\
=\sum_{l=1}^n\sum_{s=1}^n \binom{n}{s} (-1)^{s+1} \left[
\binom{n-s}{l} / \binom{n}{l} \right]^V.
\end{align*}
\end{proof}

\begin{theorem}[Optimal value for $V$]
Given $M$ and $L$, the expected value of $V$ of verifiers, each verifying $L$ clients, such that all $M$ clients are verified is
\begin{align}
\mathbb{E}[V]
= \binom{M}{L}  \sum_{s=1}^M 
\dfrac{(-1)^{s+1} \binom{M}{s}}{ \binom{M}{L} -\binom{M-s}{L}}.
\label{eq:V}
\end{align}
\label{theorem:V}
\end{theorem}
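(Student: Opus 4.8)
The plan is to regard $V$ as a nonnegative integer–valued random variable --- the number of i.i.d.\ verifiers (each sampling an $L$-subset of $\mathcal{M}$ uniformly at random, independently of the others) needed until every one of the $M$ clients has been covered at least once --- and to compute $\mathbb{E}[V] = \sum_{v \ge 0} \mathrm{P}(V > v)$ via the standard tail-sum identity. So the first task is a clean closed form for $\mathrm{P}(V > v)$, the probability that after $v$ verifiers at least one client is still unverified.

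For this I would mirror the inclusion--exclusion argument already used in the proof of Theorem~\ref{theorem:L}. Fix $v$ and, for a subset $S \subseteq \mathcal{M}$ with $|S| = s$, let $A_S$ be the event that none of the $v$ verifiers picks any client in $S$. Since the verifiers act independently and a single one avoids $S$ with probability $\binom{M-s}{L}/\binom{M}{L}$, we get $\mathrm{P}(A_S) = \big[\binom{M-s}{L}/\binom{M}{L}\big]^v$, which depends on $S$ only through $s$. Inclusion--exclusion over the $M$ singleton events ``client $j$ is unverified'' then yields
\[
\mathrm{P}(V > v) = \sum_{s=1}^{M} (-1)^{s+1} \binom{M}{s} \left[ \binom{M-s}{L} \big/ \binom{M}{L} \right]^{v}.
\]
As a sanity check, at $v=0$ this equals $\sum_{s=1}^M (-1)^{s+1}\binom{M}{s} = 1$ by the binomial theorem, as it must be, so the $v=0$ term silently carries the ``$+1$'' of the coupon-collector count.

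Then I would sum over $v \ge 0$ and interchange the (finitely many) outer terms with the geometric summations, getting $\mathbb{E}[V] = \sum_{s=1}^M (-1)^{s+1}\binom{M}{s}\sum_{v\ge 0} r_s^{\,v}$ with $r_s = \binom{M-s}{L}/\binom{M}{L}$. Because $1 \le s \le M$ and $L \ge 1$ force $\binom{M-s}{L} < \binom{M}{L}$, each $r_s < 1$, so the inner series converges to $1/(1-r_s) = \binom{M}{L}\big/\big(\binom{M}{L} - \binom{M-s}{L}\big)$; factoring out $\binom{M}{L}$ gives exactly Eq.~\eqref{eq:V}.

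The genuine care-points are bookkeeping rather than deep: justifying the interchange of summations (immediate, since the outer sum is finite and each inner series is absolutely convergent), confirming $r_s < 1$ strictly for every $s \ge 1$ so the geometric sums are finite, and fixing the convention that $V$ is counted from the first verifier so that $\mathrm{P}(V>0)=1$. I expect the main obstacle, to the extent there is one, to be purely expository --- aligning the index set of the inclusion--exclusion with the starting index of the tail sum so that the final formula emerges with no stray additive constant --- and getting the singleton inclusion--exclusion set up cleanly, exactly as in Theorem~\ref{theorem:L}.
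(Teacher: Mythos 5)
Your proposal is correct and follows essentially the same route as the paper's own proof: inclusion--exclusion over the events that a given subset of clients is avoided by all $v$ verifiers to obtain $\mathrm{P}(V>v)$, followed by the tail-sum formula $\mathbb{E}[V]=\sum_{v\ge 0}\mathrm{P}(V>v)$ and geometric summation. Your write-up is in fact slightly more careful than the paper's (you justify the interchange of sums and the strict inequality $\binom{M-s}{L}<\binom{M}{L}$ that makes each geometric series converge), but there is no substantive difference in approach.
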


\begin{proof}
    Consider a subset ${S} \subset \mathcal{M}$. If we have $v$ verifiers, the probability that all $v$ verifiers do not choose any client in $\mathcal{S}$ is 
\begin{align}
P(S)=\left[{\binom{M-|\mathcal{S}|}{L}}/{\binom{M}{L}}\right]^v.
\end{align}
The probability that more than $v$ verifiers ($V > v$) select all $M$ clients is equal to the probability that exactly $v$ verifiers do not select all $K$ clients, and it is equal to:
\begin{align}
\mathrm{P}(V > v)= \sum_{S \subset \mathcal{M}, |S|=1} P(S)-\sum_{S \subset \mathcal{M} ,|S|=2} P(S)\\
+\sum_{S \subset \mathcal{M}, |S|=3} P(S)- \sum_{S \subset \mathcal{M}, |S|=4} P(S) + \ldots \\
=\sum_{s=1}^M \binom{M}{i} (-1)^{s+1}\left[\binom{M-s}{L}/\binom{M}{L}\right]^v.
\end{align}
So, the expected $V$ such that all $M$ clients are verified is:
\begin{align*}
\mathbb{E}[V]=\sum_{v=0}^{\infty} P(V > v)
= \binom{M}{L}  \sum_{s=1}^M 
\dfrac{(-1)^{s+1} \binom{M}{s}}{ \binom{M}{L} -\binom{M-s}{L}}.
\end{align*}
\end{proof}

The above theorems provide theoretical guidelines we should follow to ensure verification effectiveness. 
Whether we have the expected number of verifiers to join and whether they are cooperative  affect the quality of verification. Using our proposed client-as-a-verifier approach can help. Regardless, there must be incentives for honest verifiers. This is common in blockchain systems that require collaborators. FedBlock is a framework and we leave this incentive issue to actual real-world implementations, hence beyond the scope of this paper.

\section{Evaluation}\label{sec:eval}
We conducted an experiment to evaluate FedBlock  against backdoor attacks and also compared it to a recent representative of  FL backdoor defense which does not adopt blockchain. 

\subsection {Experimental Setup}

\textbf{FL algorithm setting.} We use the model averaging and random client selection approach of the FedAvg algorithm for each  aggregation round in FedBlock. The training and test data are from two datasets, EMNIST using the LeNet model  \cite{lecun} and CIFAR-10 using the Zisserman model \cite{DBLP:journals/corr/SimonyanZ14a}. 
The number of FL clients is $N=3383$ for  EMNIST   and $N=200$ for   CIFAR-10. In each  round (500 by default),  $K=30$ random clients are selected to receive the global update. The client data distribution is non-iid. These datasests are benchmark for evaluation in FL literature and these parameters also used in earlier works on FL+Backdoor \cite{DBLP:conf/ijcnn/NguyenNNWPNN23,wang2020attack}. 

\textbf{Adversary setting.} We consider rigorous backdoor attack scenarios by adjusting three parameters: 1) attacker ratio $(\epsilon)$: the portion of compromised clients out of all clients, 2) poisoned data rate ($PDR$): the portion of poison inserted in client data;  and 3) non-IID degree $(\varphi)$: the heterogeneity of training data distribution over clients.  Unless otherwise mentioned, by default,  $\epsilon = 25\%$, $\varphi=0.5$, and $PDR=33\%~ (50\%)$  for CIFAR-10 (EMNIST). These and other unmentioned parameters for reproducing are inherited from  \cite{wang2020attack, xie2021crfl}.

\textbf{Backdoor attack strategies.} We evaluated FedBlock  under  edge-case based attacks, the most difficult type of backdoor. Specifically, we implemented 
three edge-case strategies proposed in \cite{wang2020attack}: black-box attack, projected gradient descent (PGD) attack, and PGD combined  with model replacement (PGD+MR). In PGD, the projected frequency is set to 1 since there may be malicious clients in each aggregation round.

\textbf{Verification parameters.}
In the smart contract, we set $M=30$, i.e., in each aggregation round the smart contract triggers a need for the verifiers to score collectively 30 clients. We evaluated with various settings but, by default, there are $V=15$ verifiers, each expected to score $L=7$ clients among these 30 clients. This choice is suggested by Theorems \ref{theorem:L} and \ref{theorem:V}. 


\begin{figure*}[t]
    \begin{minipage}{0.43\linewidth}
    \centering
    \captionof{table}{FedBlock vs. FedGrad under different backdoor strategies (CIFAR-10 and EMNIST,  500  rounds).}
    \label{table:cen-decen}
    \setlength\tabcolsep{3pt} 
    \resizebox{\linewidth}{!}{%
        \begin{tabular}{c|c|c|c|c|c|c|c|c|c}
            \toprule
            \multicolumn{2}{c|}{\multirow{2}{*}{\textbf{Attack type}}} & \multicolumn{4}{c|}{CIFAR-10} & \multicolumn{4}{c}{EMNIST} \\
            \cmidrule(lr){3-6} \cmidrule(lr){7-10}
            \multicolumn{2}{c|}{} & {MA} & {BA} & {TPR} & {TNR} & {MA} & {BA} & {TPR} & {TNR}\\
            \midrule
            \multirow{2}{*}{\begin{tabular}[c]{@{}c@{}}Black\\-box\end{tabular}} 
            & FedGrad &  84.1 & 1.5 & 1.0 & 0.67 
            & 98.28 & 7.0 & 1.0 & 0.68\\
            & \textbf{FedBlock} & \textbf{84.2} & \textbf{1.0} & 1.0 & \textbf{0.74}
            & \textbf{98.65} & 7.0 & 1.0 & \textbf{0.83}\\
            \midrule
             \multirow{2}{*}{PGD} & FedGrad & 83.8 & \textbf{1.0} & 1.0 & 0.67
            & 98.36 & \textbf{7.0} & 1.0 & 0.67 \\
            & \textbf{FedBlock} & \textbf{84.0} & 2.3 & 1.0& \textbf{0.74}
            & \textbf{98.61} & 8.0 & 1.0 & \textbf{0.83} \\
            \midrule
            \multirow{2}{*}{
            \begin{tabular}[c]{@{}c@{}}PGD\\+MR\end{tabular}} 
            & FedGrad & 84.6 & \textbf{2.8} & 1.0 & 0.67
            & 98.43 & 8.0 & 1.0 & 0.68 \\
            & \textbf{FedBlock} & \textbf{84.6} & 3.6 & 1.0 & \textbf{0.81}
            & \textbf{98.62} & \textbf{6.0} & 1.0 & \textbf{0.84} \\
            \bottomrule
        \end{tabular}
    }

    \end{minipage} 
    \hfill
    \begin{minipage}{0.55\linewidth}
        \centering
        \includegraphics[width=1\linewidth]{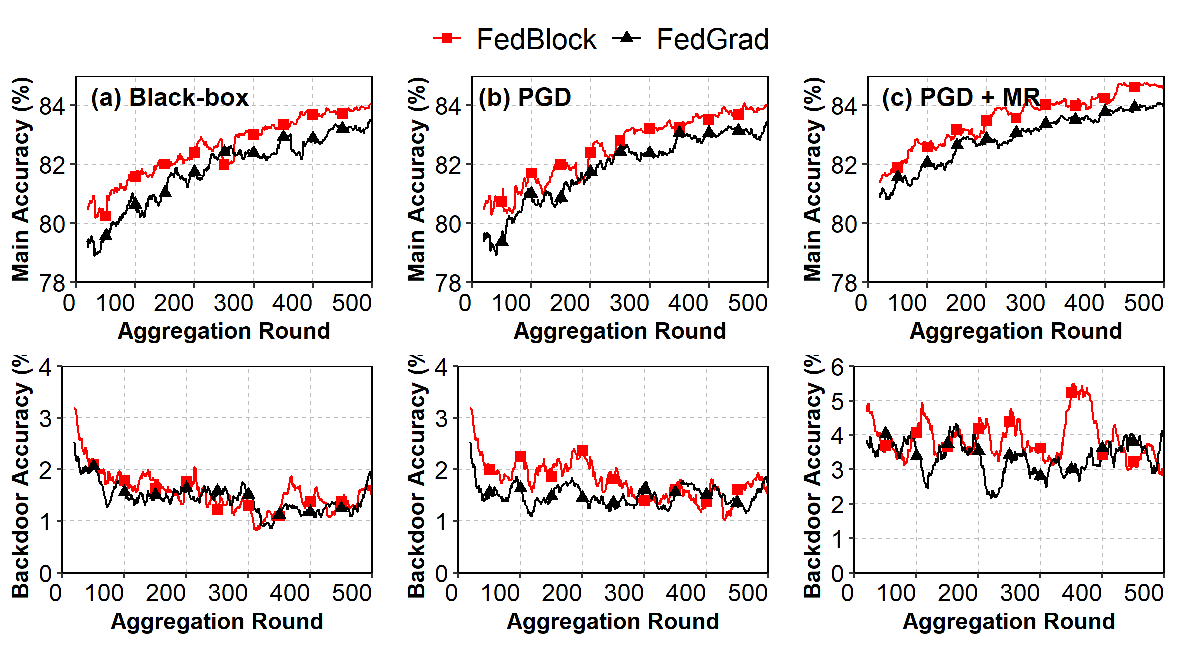}
        \caption{Real-time performance in individual aggregation rounds (CIFAR-10). For better visualization, here we plot the results that are smooth-averaged after every 20 rounds.}
        \label{fig:robust-real-time}
    \end{minipage}
    \vspace{-0.3cm}
\end{figure*}

\textbf{Evaluation metrics.} We used Main-task Accuracy (MA) and Backdoor Accuracy (BA), two usual metrics to measure the impact on learning accuracy in defense against backdoors  \cite{bagdasaryan2020backdoor, wang2020attack}. MA represents the prediction accuracy of the main task. BA represents the prediction accuracy of the attacker's targeted learning goal. We also computed two other metrics, True Positive Rate (TPR) and True Negative Rate (TNR), to measure the accuracy in identifying truly malicious clients and truly benign clients, respectively. A good defense should offer high MA, low BA, high TPR, and high TNR.

\subsection{Comparison to the centralized non-blockchain benchmark}

We now discuss the results of comparing FedBlock to FedGrad \cite{DBLP:conf/ijcnn/NguyenNNWPNN23}. We chose FedGrad for benchmarking because i) it is a recent backdoor defense proposal for conventional non-blockchain FL, ii)
it had been shown to compete well with the earlier literature, and iii) our implementation of FedBlock is similar to FedGrad in the use of ultimate layer's model information to filter out malicious models. Our goal was to investigate how  FedBlock could approach or even outperform   FedGrad. 
The comparison was conducted using three types of backdoor attack: Blackbox, PGD, and PGD+MR. The parameter configurations, as mentioned in the previous section, were set the same for both FedBlock and FedGrad. 

\textbf{Accuracy comparison. }
Table \ref{table:cen-decen} shows the results. Both methods are observed strongly capable to identify malicious models, as evidenced by their TPR reaching 1.0. This is further supported by seeing resonably high values for MA and reasonably low values for BA. However, noticeably,  FedBlock offers a much higher TNR than FedGrad in all comparison scenarios. In other words, FedBlock is more effective in detecting benign clients.  FedBlock   also outperforms FedGrad in MA for all scenarios, meaning better  prediction for the main learning task. In terms of BA, FedBlock and FedGrad are  comparable, having very low BA, insignificant. The above observations apply to both CIFAR-10 and EMNIST datasets.

A zoom-in for each aggregation round is seen in Figure \ref{fig:robust-real-time} for CIFAR-10 as an example. They both are observed to improve  over the time. In early rounds, MA is low and BA is high. This is understandable. For every FL system, the training can only converge to a good accuracy after a sufficient number of aggregation rounds. This explains the low values for MA. Early rounds are the most vulnerable as backdoor attacks can easily make effect, hence high values for BA. With more rounds, the defense starts to see its impact. It is consistently observed that FedBlock has better MA. FedGrad offers better BA but only slightly, and BA in both methods is very low.

\textbf{Time comparison. }
An obvious disadvantage of  centralized FL backdoor-defense techniques, represented by FedGrad, is the computation time because the server has to execute the detection algorithm over the entire set of $M$ clients. FedBlock, by decentralizing this task to $V$ verifiers each computing only a subset of $L$ clients, the computation overhead per verifier is  less. Since this computation by each verifier is done in parallel, the entire verification can be completed faster, hence speeding up each aggregation round. To validate this point, we have Figure \ref{figure:cen-decen} plotting the average time to complete each aggregation round in FedBlock and FedGrad. FedBlock is at least three times faster. This is a great advantage of a decentralized solution and FedBlock owns that.

\subsection{Effect of  different backdoor intensity and  non-iid-ness}

\begin{figure*}[tb]
    \centering
    \begin{minipage}{0.3\linewidth}
        \centering
        \includegraphics[width=0.95\linewidth]{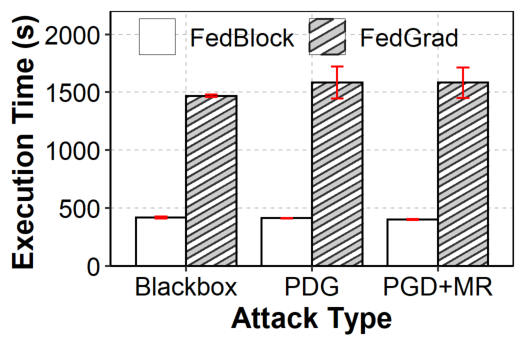}
        \caption{FedBlock vs. FedGrad: average execution time per aggregation round (CIFAR-10, 500 rounds).} 
        \label{figure:cen-decen}
    \end{minipage}
    \hfill
    \begin{minipage}{0.65\linewidth}
    \begin{center}
    \centering
    \includegraphics[width=0.6\linewidth]{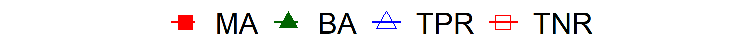}\\
     {\includegraphics[width=0.325\linewidth]{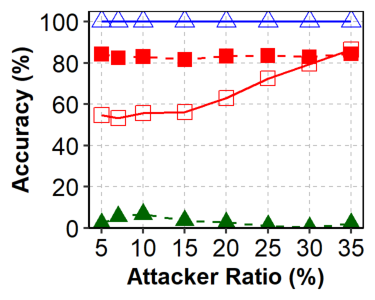}} 
     {\includegraphics[width=0.325\linewidth]{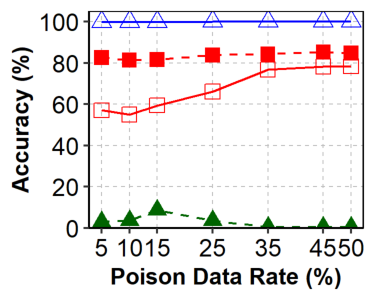}} 
     {\includegraphics[width=0.325\linewidth]{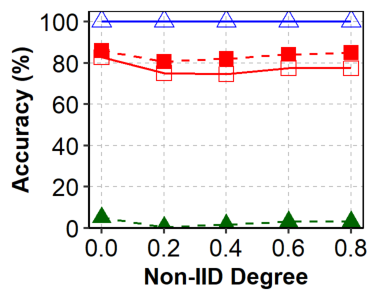}}    
        \caption{FedBlock under variations in (1) attacker ratio, (2) poisoned data rate, and (3) non-IID degree. Here, the results are for CIFAR-10 dataset, black-box attack, and 500 rounds.}
        \label{fig:robust}     
    \end{center}
 
    \end{minipage}
\end{figure*}

\begin{figure*}[tb]
    \begin{minipage}{0.45\linewidth}
        \centering
    \centering
    \captionof{table}{FedBlock under the effect of bad verifiers and effectiveness of the client-as-a-verifier (CAAV) approach (500 rounds, CIFAR-10, black-box backdoor).}
    \label{table:verify_attacker_free}
    \setlength\tabcolsep{3pt} 
    \resizebox{0.9\linewidth}{!}{%
        \begin{tabular}{c|c|c|c|c|c}
            \toprule
            \multicolumn{1}{c|}{\textbf{Attack type}} & \textbf{\% bad verifiers} & {MA} & {BA} & {TPR} & {TNR} \\
            \midrule
            Normal & $0  \%$ & {84.20} & {1.00} & 1.0 & {0.74}\\ \midrule
            \multirow{3}{*}{\begin{tabular}[c]{@{}c@{}}Random\end{tabular}}
            & 10  \%& 83.16 & 17.85 & 0.99 & 0.67 \\
            & 20  \%& 82.18 & 67.34 & 0.97 & 0.57 \\
            & 30  \%& 82.61 & 64.28 & 0.96 & 0.52 \\ 
             & \textbf{CAAV}  & \textbf{83.59} & \textbf{{0.51}} & \textbf{1.0} & \textbf{0.74} \\ \midrule
            \multirow{3}{*}{\begin{tabular}[c]{@{}c@{}}Reverse\end{tabular}}
            & 10  \%& 83.86 & 58.16 & 0.98 & 0.62 \\
            & 20  \%& 84.63 & 72.96 & 0.93 & 0.50 \\
            & 30  \%& 83.80 & 76.53 & 0.54 & 0.36 \\ 
            & \textbf{CAAV}   & \textbf{84.71} & \textbf{{0.51}} & \textbf{1.0} & \textbf{0.75}\\ 
            \bottomrule
        \end{tabular}
    }
    \end{minipage}
    \hfill
    \begin{minipage} {0.53\linewidth}
        \centering
        \captionof{table}{FedBlock using client-as-a-verifier (CAAV) under different verifier-attack and backdoor-attack strategies (500 rounds, CIFAR-10 and EMNIST).}
        \label{table:verify_attacker}
        \setlength\tabcolsep{3pt} 
        \resizebox{1.0\linewidth}{!}{%
            \begin{tabular}{c|c|c|c|c|c|c|c|c|c}
                \toprule
                \multicolumn{2}{c|}{\multirow{2}{*}{\textbf{Attack type}}} & \multicolumn{4}{c|}{CIFAR-10} & \multicolumn{4}{c}{EMNIST} \\
                \cmidrule(lr){3-6} \cmidrule(lr){7-10}
                \multicolumn{2}{c|}{} & {MA} & {BA} & {TPR} & {TNR} & {MA} & {BA} & {TPR} & {TNR}\\
                \midrule
                \multirow{3}{*}{\begin{tabular}[c]{@{}c@{}}Black\\-box\end{tabular}} 
                & Normal& {84.20} & {1.00} & 1.0 & {0.74}
            & {98.65} & 7.0 & 1.0 & {0.83} \\
                & CAAV: Random  &  83.59 & {0.51} & 1.0 & 0.74 
                & 98.63 & {3.0} & 1.0 & 0.82\\
                & CAAV: Reverse  & 84.71 & {0.51} & 1.0 & 0.75
                & 98.62 & 6.0 & 1.0 & 0.83\\
                \midrule
                 \multirow{3}{*}{PGD} 
                 & Normal  & {84.00} & 2.30 & 1.0& {0.74}
            & {98.61} & 8.0 & 1.0 & {0.83} \\
                 & CAAV: Random &  83.62 & {0.51} & 1.0 & 0.73 & 98.63 & {3.0} & 1.0 & 0.82\\
                 & CAAV: Reverse  &  84.06 & 3.06 & 1.0 & 0.69 
                & 98.63 & 5.0 & 1.0 & 0.82\\
                \midrule
                \multirow{3}{*}{
                \begin{tabular}[c]{@{}c@{}}PGD\\+MR\end{tabular}} 
                & Normal&  {84.60} & 3.60 & 1.0 & {0.81}
            & {98.62} & {6.0} & 1.0 & {0.84} \\
                & CAAV: Random  &  85.31 & 4.08 & 1.0 & 0.81 
                & 98.62 & {2.0} & 1.0 & 0.83\\
                & CAAV: Reverse  &  83.71 & 6.63 & 1.0 & 0.82 
                & 98.60 & 6.0 & 1.0 & 0.84\\
                \bottomrule
            \end{tabular}
        }
    \end{minipage} 
\end{figure*}

We investigated FedBlock stability under increasing intensity of backdoor attack by varying these parameters: 1) attacker ratio $\epsilon \in [5\%, 35\%]$: the portion of compromised clients out of all clients, 2) poisoned data rate $PDR \in [5\%, 50\%$]: the portion ratio of poison inserted in client data;  and 3) non-IID degree $\varphi \in [0, 0.8]$.  Figure~\ref{fig:robust} provides  a representative example for discussion, showing the results for the CIFAR-10 dataset, black-box attack, and 500 aggregation rounds. 
It is clearly seen that FedBlock is  stable in all metrics. In all scenarios, FedBlock offers TPR 1.0, always excellent in removing suspicious attackers. Also, MA remains high and BA low, indicating FedBlock's sustainable high main-task accuracy and low backdoor-targeted accuracy.

Noticeably observed, increasing backdoor intensity beyond a certain threshold does not  benefit the attacker. This is evidenced with lower BA and higher TNR seen in Figure \ref{fig:robust} (1, 2). Too much poison, by increasing the attacker ratio or PDR excessively, actually harms their success rate in targeted goal (lower BA)  and makes FedBlock easier to detect benign clients (higher TNR). This result shows an interesting strength of FedBlock against backdoors committed by  greedy attackers.

Also, Figure \ref{fig:robust} (3)    shows that, when the training data distribution over the clients is more non-iid enough (increasing $\varphi$), FedBlock increases its capability to identify benign models (higher TNR value). This is understandable because, as explained in Remark (1) of Section \ref{subsec:ultimatelayer}, with non-iid distribution, the benign's local models are more diverse in early rounds, making FedBlock more effective in distinguish them from malicious models. In practice, FL accuracy suffers from increasing non-iid-ness, but here we present an interesting finding that this non-idd-ness can be a good discriminator against backdoors and it is utilized in FedBlock.

\subsection{Effect of Malicious Verifiers}

Next, we  evaluated the case where, besides backdoor attacks affecting client models, a subset of verifiers  are also dishonest by submitting bad scores to the smart contract.  We considered two types of verifier-attack: i) Random score: randomize scores to submit to the smart contract; and ii) Reverse score: reverse scores such that score 0 becomes 1, score 1 becomes 0, and score 1/2 unchanged.

For the example case of CIFAR-10 and block-box backdoor, Table \ref{table:verify_attacker_free} shows the performance of FedBlock where verifiers can be anyone, not necessarily clients, when $p = 10\%, 20\%, 30\%$ of them are dishonest.  It is no surprise that the presence of bad verifiers leads to degrading performance. However,  consider our proposed client-as-a-verifier (CAAV) approach, which considers only clients with trust score above 1/2 as verifiers. For both kinds of verifier-attack, random score or reverse score, CAAV is not impacted at all. As we explained in the previous section, this is because our trust scoring quality improves over the time  and so CAAV results in more and more good verifiers being selected in the verification procedure. 

Table \ref{table:verify_attacker} provides more experimental evidence to substantiate the robustness of CAAV under verifier attacks, which applies to all experimented cases of datasets, back-door attacks, and verifier-attacks. Here we see some slight differences in BA, but all these BA values are small, considered good.

\section{Conclusions}\label{sec:conclusion}
We have proposed FedBlock, a blockchain-based FL framework robust to  backdoor attacks. Despite growing research in FL backdoor defense, the server has always been assumed honest. Recently, blockchain has been utilized to decentralize the server role. However,  such a setting does not touch on backdoor security, except rare work which, however, requires building a new blockchain from scratch. In contrast, FedBlock can run on any blockchain network. The proposed framework introduces the role of verifiers who in a decentralized manner detect malicious local model updates so that they are removed from global aggregation. Aware of blockchain's gas-fee constraints, FedBlock does not require on-chain storage of all client models. Instead, it stores only a small amount on-chain while the rest is off-chain. Compared to the literature of FL centralized backdoor defense, our evaluation has shown that FedBlock is highly competitive in the most important accuracy metrics while offering faster FL training time. Specially, server attack is non-existent in FedBlock thanks to blockchain. 

\textbf{Limitations and future work}. FedBlock is a fresh direction with plenty of room to improve and explore. There are issues we have not touched deeply. Firstly, with possibility of malicious verifiers, our proposed client-as-a-verifier approach can be highly effective, but if anyone can be a verifier, we need more than just assuming that the majority of verifiers is honest and will prevail over many FL rounds. We need a mechanism to detect and ignore malicious verifiers. Secondly, for FedBlock to practically be usable, we need an incentive mechanism to encourage good participation from verifiers. Last but not least, in the current version, the smart contract selects random clients to verify in each round, but this selection criterion might not be optimal. In the future work, we will investigate the aforementioned issues as well as build a real-blockchain prototype for FedBlock.

\section*{Acknowledgements}
This work was supported by Vingroup Innovation Foundation (VINIF) under project code VINIF.2021.DA00128.

\bibliographystyle{IEEEtran}
\bibliography{bibliography, misc}

\end{document}